\newtheorem{theorem}{Theorem}
\newtheorem{corollary}[theorem]{Corollary}
\newtheorem{lemma}[theorem]{Lemma}
\newtheorem{definition}[theorem]{Definition}
\begin{document}

\preprint{APS/123-QED}

\title{Local-dimension-invariant Calderbank-Shor-Steane Codes with an Improved Distance Promise}
\author{Arun J. Moorthy}
\affiliation{BASIS Scottsdale, Scottsdale, Arizona, 85259, United States of America}

\author{Lane G. Gunderman}
 \email{lgunderman@uwaterloo.ca}
\affiliation{%
 The Institute for Quantum Computing, University of Waterloo, Waterloo, Ontario, N2L 3G1, Canada
}%
\affiliation{Department of Physics and Astronomy, University of Waterloo, Waterloo, Ontario, N2L 3G1, Canada}

\date{\today}

\begin{abstract}
Quantum computers will need effective error-correcting codes. Current quantum processors require precise control of each particle, so having fewer particles to control might be beneficial. Although traditionally quantum computers are considered as using qubits (2-level systems), qudits (systems with more than 2-levels) are appealing since they can have an equivalent computational space using fewer particles, meaning fewer particles need to be controlled. In this work we prove how to construct codes with parameters $[[2^N,2^N-1-2N,\geq 3]]_q$ for any choice of prime $q$ and natural number $N$. This is accomplished using the technique of local-dimension-invariant (LDI) codes. Generally LDI codes have the drawback of needing large local-dimensions to ensure the distance is at least preserved, and so this work also reduces this requirement by utilizing the structure of CSS codes, allowing for the aforementioned code family to be imported for any local-dimension choice.
\end{abstract}

\maketitle

Mistakes are bound to happen. Quantum computers, much like their classical counterparts, must overcome and correct their errors in order to perform reliable computations. \if{false}Current quantum computers require very expensive, carefully calibrated controls, such as cryogenic environments \cite{devoret2004superconducting} or many lasers \cite{kielpinski2002architecture}.\fi Building a modest sized quantum computer currently is a significant challenge. If, however, qudits, particles with $q$-levels (or local-dimension $q$), are used in the place of qubits, particles with just $2$-levels, we can drastically reduce the number of \textit{particles} that need protection while still having a similarly sized computational space \cite{wang2020qudits}. This relationship can also be captured in the formula for the number of logical codewords for a qudit stabilizer code, $q^k$, which states that the number of codewords grows exponentially with the local-dimension $q$ as the base, so increasing $q$ increases the amount of information protected per particle \cite{ketkar2006nonbinary}. As of this time there are currently some qudit quantum computers which are in early development \cite{qudition,quditlight,molecules,yurtalan2020implementation}. We aim here to provide more options for error-correcting codes for qudit systems, as well as show that the local-dimension-invariant framework can provide at least one family with the distance always being promised.

A common tool used for correcting errors in the quantum case are stabilizer codes, the quantum analog of classical linear codes \cite{gottesman1997stabilizer}. For these codes, there are four parameters that are traditionally used to specify the effectiveness of the code, written as $[[n,k,d]]_q$. $n$ specifies the number of physical qudits, or particles, that the code uses to protect $k$ logical qudits. The level of protection is specified by $d$, which is the distance of the code. This provides a limit on the rate of errors we are allowed in our system while still ensuring that the information is preserved \cite{nielsen2002quantum}. Lastly, $q$ is the number of levels the system has which are being used, also known as the local-dimension, and is assumed throughout this work to be some prime number. The choice of the local-dimension always being a prime number is primarily to ensure unique multiplicative inverses--it is likely that this restriction can be at least somewhat loosened. An ideal quantum code would have both a high number of protected qudits, $k$, and distance, $d$. Unfortunately, the Generalized Quantum Hamming Bound, given by: 
\begin{equation}
    \sum_{j=0}^{\lfloor \frac{d-1}{2}\rfloor} {n\choose j} (q^2-1)^j\leq q^{n-k},
\end{equation}
shows that there’s a trade-off between $k$ and $d$--and some dependence on $n$ and $q$--at least in the case of non-degenerate codes \cite{ketkar2006nonbinary}. In general, the higher the distance is, the fewer the number of logical qudits that are protected. While this bound does not always hold in the case of degenerate codes, for some classes of codes this bound is still true \cite{sarvepalli2010degenerate,gottesman1997stabilizer}.


\if{false}Quantum error-correction effectively began when Peter Shor proposed his 9-qubit code from the composition of two 3-qubit codes; where the inner code protects against bit-flips and the outer code protects against phase-flips, and so creating a somewhat error-resistant single logical qubit \cite{shor1995scheme}. Following this Gottesman, as well as other papers, extended these examples into the so-called stabilizer formalism and produced the Calderbank-Shor-Steane (CSS) theorem for importing some classical codes as methods for protecting quantum information \cite{gottesman1997stabilizer,calderbank1996good,steane1996error}. Further work extended this mapping to classical codes from the binary (qubit) case to the nonbinary (qudit) case \cite{gottesman1998fault, ketkar2006nonbinary}. This allowed for the definition and proofs for many qudit codes. Although this defined many families of codes, many of these code families have challenging relationships required between their parameters, additionally these families have sets of parameters without a corresponding code, and so loosening these constraints could be useful \cite{ketkar2006nonbinary}. One avenue for this would be to take heavier inspiration from classical binary codes, where the majority of classical codes exist, and somehow apply them for qudit systems as well.\fi While most codes have been designed for a given local-dimension value, Chau showed that both the 5-qubit \cite{chau1997five} and 9-qubit \cite{chau1997correcting} codes could be transformed into qudit codes with the same parameters and minimal modifications to the code. These examples were extended into a framework allowing all qudit codes with $q$ levels to be transformed into valid qudit codes over $p$ levels, however, it only showed that the distance could be preserved once there are sufficiently many levels; beyond some critical value $p^*$, which is a function of $n$, $k$, $d$, and $q$ \cite{gunderman2020local}. These works paved the path to using codes over various local-dimensions, however, this latter work left a large gap between $q$ (the initial dimension) and $p^*$ where statements about the distance cannot be made without manually determining the distance. In this report we consider the subclass of stabilizer codes known as Calderbank-Shor-Steane (CSS) codes which allows for a significantly reduced local-dimension requirement for this class in order for the distance to be preserved. Following this we consider one family of a qubit CSS code and show that at least one local-dimension-invariant (LDI) representation can be used with the same parameters regardless of the local-dimension.

\section{Background}

In this section we provide a brief background on qudit stabilizer operators. For a more full review see \cite{ketkar2006nonbinary}. In order to allow for easier discussion of qubits and qudits, the general term "particle" or "register" is used. Throughout this work we set $\mathbb{Z}_q=\{0,1,\ldots, q-1\}$, where $q$ is a prime number.

For qudits the standard Pauli matrices are replaced by the generalized Paulis $X_q$ and $Z_q$, which act as follows:
\begin{equation}
    X_q|j\rangle=|j\oplus 1\rangle,\quad Z_q|j\rangle =\omega^j |j\rangle,\quad \omega=e^{2\pi i/q},\quad j\in \mathbb{Z}_q,
\end{equation}
where $\oplus$ is addition mod $q$. For notational clarity, we will drop the subscript on these operators. These matrices, like their qubit counterparts, form a complete basis for the set of Unitary operators in $SU(\mathbb{C}^q)$, which means that any error can be decomposed as a linear combination of powers of these operators \cite{ashikhmin2001nonbinary}. These operators also form a group, which over a single particle is indicated by $\mathbb{P}_q$. This group structure is preserved over tensor products of these operators, so a generalized Pauli acting on $n$ particles will be in the group $\mathbb{P}_q^n$. 

The generalized Pauli operators follow the following commutation relation: $X^a Z^b=\omega^{-ab} Z^b X^a$. A stabilizer code with $n-k$ generators is equivalent to a commuting subgroup of size $q^{n-k}$ of $\mathbb{P}_q^n$ which does not contain a nontrivial multiple of the identity operator. 

Upon quotienting out the leading scalar, the Pauli operators can be written as vectors using the symplectic representation. As we will be varying the local-dimension we use the slightly more flexible $\phi$ representation which also permits specification of the local-dimension. This mapping carries $\phi_q: \mathbb{P}_q^n\mapsto \mathbb{Z}_q^{2n}$ and is defined by:
\begin{equation}
    \phi_q \left(\bigotimes_{t=1}^n X^{a_t}Z^{b_t}\right)=\left(\bigoplus_{t=1}^n a_t\right) \bigoplus \left(\bigoplus_{t=1}^n b_t\right),
\end{equation}
where in the above $\bigoplus$ is a direct sum symbol. This mapping follows the law of composition of $\phi_q(p_1\circ p_2)=\phi_q(p_1)\oplus\phi_q(p_2)$, where $\oplus$ is entry-wise addition $\mod q$ and $p_1$ and $p_2$ are two generalized Pauli operators. We will write a vertical bar between the vector for the $X$ powers and the vector for the $Z$ powers mostly for ease of reading. The special case of $\phi_\infty$ which follows the same relations, but for this representation one does not take modulo any value but carries computations over the integers, was proven in \cite{gunderman2020local}. As an example of the difference, consider the following:
\begin{equation}
    \phi_2(X\otimes Z^{-1} \otimes I \otimes XZ)=(1\ 0\ 0\ 1\ |\ 0\ 1\ 0\ 1),\quad \phi_\infty(X\otimes Z^{-1} \otimes I \otimes XZ)=(1\ 0\ 0\ 1\ |\ 0\ -1\ 0\ 1).
\end{equation}

In the $\phi$ representation the commutator of two generalized Paulis, $p_1$ and $p_2$, is written and computed as $\phi(p_1)\odot \phi(p_2)=\vec{a}^{(1)}\cdot\vec{b}^{(2)}-\vec{b}^{(1)}\cdot\vec{a}^{(2)}$. This is not formally a commutator, but when this is zero, or zero modulo the local-dimension, the two operators commute, while otherwise it is a measure of number of times an $X$ operator passed a $Z$ operator without a corresponding $Z$ operator passing an $X$ operator.

The distance of a stabilizer code is a measure of its error-protecting capabilities. The standard choice of the depolarizing channel is considered here, meaning that the distance of the code is measured in terms of the Pauli weight of the error, given by the number of non-identity operators in the Pauli error. To aid in determining which error might have occurred the \textit{syndrome} values are computed by finding the commutator of the error with each of the generators for the stabilizer code.

A stabilizer code is written in the $\phi$ representation as a matrix whose rows are a set of $n-k$ generators for the subgroup. There are some operations that we may perform which must preserve the parameters of the code, these include, in the $\phi$ representation:
\begin{itemize}
    \item Row swaps, corresponding to relabelling the generators.
    \item Swapping columns $i$ and $i+n$ with $j$ and $j+n$, corresponding to relabelling particles.
    \item Multiplying a row by any number in $\{1,\ldots, q-1\}$, corresponding to composing that generator with itself.
    \item Adding row $i$ to row $j$, corresponding to composing the operators.
    \item Swapping column $i$ with $-1$ times column $i+n$, corresponding to a discrete-fourier transform (DFT) on particle $i$; the qudit analog of the Hadamard gate.
\end{itemize}
We neglect the phase gate, $\sqrt{Z}$, since we do not use it here, but it would also preserve the parameters of the code. Note though that the SUM gate (qudit CNOT gate) will not usually preserve the distance of the code so we do not allow ourselves to perform that operation on our codes \cite{gottesman1998fault}. Now that the tools have been presented we proceed to our results.

\if{false}
\section{Random Qudit Code Generation}


While many qudit code families are known, these families have many gaps in their possible parameters \cite{ketkar2006nonbinary}. While asymptotically this is not a concern, for small qudit systems we will likely not be able to utilize the full capability of the designed qudit computer without custom designed codes. In the case of qubits there is an enumeration of the best code parameters \cite{Grassl:codetables}. One way to generate such customized qudit codes is to randomly generate them.

In this case we take in a set of desired parameters: $n$, $k$, $d$, and $q$. Then, so long as these parameters satisfy the generalized quantum Hamming bound, we begin by generating a single $2n$ length vector of elements from $\mathbb{Z}_q$. Following this we randomly generate another $2n$ length vector of elements from $\mathbb{Z}_q$, until we generate one which commutes with the prior vector. We then repeat this, requiring each successive vector to commute with all prior ones, until we have obtained $n-k$ linearly independent rows. This only satisfies the parameters $n$, $k$, and $q$. Finally the distance is checked and if it is not the desired value, then the process is repeated until such is found--or terminate as it either does not exist or is very hard to find. Formally the requirement that the parameters satisfy the generalized quantum Hamming bound is not per se required, as the code generated could be a degenerate code.

Using this procedure we have randomly generated the following improved codes, none of which can exist as qubit codes due to the generalized quantum Hamming bound:
\begin{equation}
[[6,2,3]]_3=\begin{bmatrix}
2 & 2 & 2 & 1 & 0 & 1 & | & 1 & 1 & 0 & 2 & 0 & 2\\
1 & 1 & 2 & 0 & 2 & 0 & | & 2 & 0 & 0 & 2 & 2 & 2\\
0 & 2 & 2 & 0 & 1 & 2 & | & 2 & 1 & 2 & 2 & 1 & 0\\
0 & 1 & 1 & 1 & 0 & 2 & | & 0 & 0 & 2 & 0 & 1 & 0
\end{bmatrix}
\end{equation}

\begin{equation}
[[7,3,3]]_3=\begin{bmatrix}
1 & 2 & 1 & 2 & 0 & 0 & 0 & | & 0 & 0 & 1 & 2 & 1 & 2 & 2\\
2 & 1 & 2 & 0 & 1 & 1 & 0 & | & 1 & 0 & 1 & 0 & 2 & 1 & 2\\
2 & 0 & 1 & 2 & 1 & 0 & 1 & | & 0 & 2 & 0 & 2 & 2 & 0 & 2\\
1 & 1 & 0 & 1 & 2 & 2 & 2 & | & 0 & 0 & 0 & 0 & 0 & 2 & 0
\end{bmatrix}
\end{equation}

\begin{equation}
[[8,4,3]]_3=\begin{bmatrix}
1 & 1 & 2 & 0 & 0 & 1 & 1 & 1 & | & 2 & 1 & 0 & 2 & 0 & 0 & 0 & 0\\
1 & 0 & 1 & 1 & 0 & 1 & 2 & 0 & | & 1 & 1 & 2 & 1 & 0 & 2 & 1 & 1\\
0 & 1 & 2 & 1 & 2 & 1 & 0 & 2 & | & 0 & 0 & 1 & 1 & 2 & 0 & 1 & 0\\
1 & 0 & 0 & 1 & 0 & 2 & 1 & 1 & | & 1 & 0 & 2 & 2 & 2 & 0 & 0 & 2
\end{bmatrix}
\end{equation}

\begin{equation}
[[10,5,3]]_3=\begin{bmatrix}
2 & 0 & 2 & 2 & 0 & 2 & 2 & 0 & 0 & 0 & | & 0 & 1 & 2 & 0 & 1 & 1 & 1 & 0 & 2 & 2\\
2 & 0 & 0 & 0 & 2 & 2 & 0 & 1 & 1 & 1 & | & 1 & 1 & 0 & 2 & 2 & 0 & 1 & 1 & 1 & 2\\
2 & 1 & 0 & 0 & 2 & 0 & 1 & 1 & 2 & 2 & | & 2 & 0 & 1 & 2 & 1 & 2 & 2 & 1 & 2 & 2\\
0 & 0 & 2 & 0 & 1 & 2 & 0 & 1 & 2 & 1 & | & 0 & 2 & 2 & 0 & 1 & 2 & 1 & 2 & 1 & 1\\
2 & 2 & 0 & 0 & 2 & 2 & 0 & 0 & 0 & 0 & | & 2 & 0 & 1 & 1 & 0 & 2 & 0 & 0 & 1 & 2
\end{bmatrix}
\end{equation}


All of these codes protect few enough qudits that they are of more immediate use. Aside from the parameters considered, other properties of codes can be of particular use for fault-tolerance such as: transversality, cyclicity, and low-weight generators. We focus on this last property here. Having lower Pauli weight generators means fewer gate estimate errors can build up to become a problem. For instance, if each gate has some error of rotation of $\epsilon$, then having $t$ non-trivial operations performed allows an error now of size roughly $t\epsilon$, so having a smaller $t$ reduces the effects of these errors. Having the lowest weight generators possible is of use in procedures such as flag fault-tolerance \cite{chao2018quantum,chao2020flag}. For this reason, the program is also able to compute the minimum choice of the maximal weight generator required to specify the code: 
\begin{equation}\label{wts}
    wt_p(\mathcal{C}):=\min_S \max_{s_i\in S} wt_p(s_i),
\end{equation}
where $wt_p(s_i)$ is the number of non-identity operators in the generator $s_i$. Note that this weight does not change under the operations we are allowed. For the codes listed above the minimal Pauli weight of the generators are given by: $5$, $6$, $7$, and $8$, respectively. For completeness the generators with this lowest maximal weight are reported in the appendix. As another example, consider the qubit quantum Hamming code, $\mathcal{C}_N$, indexed by $N$, which has $wt_p(\mathcal{C}_N)=2^{N-1}$.

While these codes only increase the number of protected \textit{particles} by one or two, the number of \textit{computational states}, the different states that would be used when performing operations, is increased more drastically. This difference becomes more clear upon considering these results within the recent \textit{local-dimension-invariant} (LDI) results \cite{gunderman2020local}. LDI codes have all generators for the code not just satisfying $\phi_q(s_i)\odot \phi_q(s_j)=0\mod q$, but the stronger condition of $\phi_\infty(s_i)\odot\phi_\infty(s_j)=0$. LDI codes can be used over any local-dimension choice, although the distance is only promised to remain the same under upon selecting a new local-dimension beyond some cutoff value for the local-dimension. A prescriptive method that allows for all stabilizer codes to be put into LDI form is provided in that work, which we will revisit in the next section.

Since the parameters for all of the codes we found cannot exist for qubit codes, due to the generalized quantum Hamming bound, the distance only has the possibility of being preserved for all larger local-dimensions. Below are plots of $q^k$ for each of the four codes listed above comparing the number of computational states of those to those of the best qubit codes (when used in LDI form). The plots illustrate that the improvement in the number of protected particles is equivalent to an increase in the \textit{power} of the growth of the number of computational states.


\begin{figure}[htb]
\includegraphics[width=\textwidth]{Arun Moorthy - Quantum Error Correction AzSEF.pptx (5) (1).png}
\fcaption{These plots compare the number of computational states of the new randomly generated code and the original qubit code upon being put into a distance preserving LDI form. Upper left is [[6,1,3]] to [[6,2,3]]; upper right is [[7,1,3]] to [[7,3,3]]; lower left is [[8,3,3]]  to [[8,4,3]]; lower right is [[10,4,3]] to [[10,5,3]]. In all cases these codes only exist for qudits.}
\centering
\end{figure}

With these randomly generated codes we have reported new codes seemingly not priorly known as well as codes whose number of computational states grows far more rapidly with the local-dimension than those already known for qubits with those particular values of $n$ and $d=3$. While these examples are themselves of some interest, the utility of this script is even greater as any user may generate custom qudit codes with parameters they desire, so long as they are possible.
\fi

\section{Local-dimension-invariant CSS Codes}


\if{false}
The simplest example of a LDI code, generated from an already known code is the following:
\begin{equation}
    \xi=\begin{bmatrix}
    1 & 1 & | & 0 & 0\\
    0 & 0 & | & 1 & 1
    \end{bmatrix}
\end{equation}
This code has $n=2$ qubits, $k=2$ generators, so $n-k=0$, which means that a single state is protected from some errors, and the distance of the code gives $d=2$. Now, if we compute the commutator of the rows (which are the generators of the code), then we obtain:
\begin{equation}
    (1\cdot 1 +1\cdot 1)-(0\cdot 0+0\cdot 0)=2
\end{equation}
This commutator value is congruent to 0 when we have qubits as $2\mod 2=0$, but if we tried to use this as a qutrit code, these two generators no longer commute as $2\mod 3=2$ and so it is not a valid stabilizer code. If instead we used the code generated by:
\begin{equation}
    \xi'=\begin{bmatrix}
    1 & -1 & | & 0 & 0\\
    0 & 0 & | & 1 & 1
    \end{bmatrix}
\end{equation}
The commutator is now exactly 0, and so no matter what choice of local-dimension $q$ we pick, the commutator will always be congruent to 0, meaning that these form a valid stabilizer code for any choice of local-dimension $q$. This ability to change the local-dimension is what makes a code local-dimension-invariant. This can always be accomplished for all stabilizer codes.\fi

We now proceed to a new result related to local-dimension-invariant (LDI) codes in the case of Calderbank-Shor-Steane (CSS) codes. CSS codes have a set of independent generators only using $X$ operators or $Z$ operators for each generator. While not the only way to generate local-dimension-invariant codes, the prescriptive method from \cite{gunderman2020local} provides one such method. Any stabilizer code $\mathcal{S}$ can be put into canonical form, which is given by:
\begin{equation}
    \mathcal{S}=\begin{bmatrix}
    I_{n-k} & X_2 & | & Z_1 & Z_2
    \end{bmatrix}.
\end{equation}
The first $(n-k)\times (n-k)$ block is the $n-k$ dimensional identity matrix, while $X_2$ is a $(n-k)\times k$ matrix, $Z_1$ is a $(n-k)\times (n-k)$ matrix, and $Z_2$ is a $(n-k)\times k$ matrix, where the last three matrices' entries are determined from the operations performed to make the first $(n-k)\times (n-k)$ block the identity matrix. Then the prescriptive method is to then transform it into:
\begin{equation}
    \mathcal{S}'=\begin{bmatrix}
    I_{n-k} & X_2 & | & Z_1+L\ & Z_2
    \end{bmatrix}
\end{equation}
where $L$ is a $(n-k)\times (n-k)$ matrix whose nonzero entries are given by: $L_{ij}=\phi_\infty (s_i)\odot \phi_\infty(s_j)$, when $i>j$. $\mathcal{S}'$ satisfies $\phi_\infty(s_i')\odot \phi_\infty(s_j')=0$ and $\mathcal{S}'=\mathcal{S}\mod q$. This merely generates a set of commuting operators, but ensuring the distance of the code is a far more challenging task.

Figure 1 illustrates the initial distance promises proven for arbitrary non-degenerate codes, as well as the improvements shown in this work. For local-dimensions $p$ with $q<p<p^*$, the distance of the code is uncertain. \if{false}For local-dimensions $p$ with $p<p^{**}$ the distance of the code when put into LDI form must decrease, while for $p^{**}<p<p^{*}$, $p\neq q$, the distance of the code is uncertain.\fi For this uncertainty region the distance could be lower, it could be the same, or it could even be higher, however, it must be determined manually. It is conjectured that for $q\leq p$ the distance can be at least preserved, however, this is only known to be possible for a few codes, such as the $5$, $7$, and $9$ particle codes \cite{chau1997five,gunderman2020local,chau1997correcting}. This work adds to this collection an infinite family. \if{false}, however, until such is shown, any local-dimension choice in the uncertain region will need its distance checked.\fi While the distance is promised to be at least the same for $p^*<p$, the caveat is that the current bound for $p^*$ for an arbitrary non-degenerate stabilizer code is very large: $p^*=B^{2(d-1)}(2(d-1))^{d-1}$ with $B$ being the largest entry in the LDI form, which is bounded by $B\leq (2+k(q-1))(q-1)$ \cite{gunderman2020local}. This value is typically large, so reductions to it is crucial to make the technique of more practical use. We show next that in the class of CSS codes we may reduce this cutoff value roughly quadratically: $p_{CSS}^*\approx \sqrt{p^*}$.

\if{false}
\begin{figure}[htb]
\includegraphics[width=\textwidth]{Arun Moorthy - Quantum Error Correction AzSEF.pptx (7) (1).png}
\caption{The green shaded region is the set of local-dimension values for which the distance of the code is uncertain.}
\centering
\end{figure}
\fi


\if{false}
\begin{figure}[htb]
\includegraphics[width=0.7\textwidth]{Arun Moorthy - Quantum Error Correction AzSEF.pptx (2).png}
\caption{This table provides a list of example results from probing the distance at some new chosen local-dimension $p$.}
\centering
\end{figure}
\fi


\if{false}
\begin{figure}[htb]
\includegraphics[width=\textwidth]{Arun Moorthy - Quantum Error Correction AzSEF.pptx (3) (1).png}
\fcaption{This image illustrates the effect of finding a reduced $p^*$ value in the qubit quantum Hamming family by obtaining a better bound on the maximal entry. The parameters of this code family are originally $[[2^N-1,2^N-2N-1,3]]_2$. Knowing this improved bound cuts down the size of the number of local-dimension values that must have their distance manually determined.}
\centering
\end{figure}
\fi
\if{false}
\begin{figure}[htb]
\includegraphics[width=\textwidth]{Arun Moorthy - Quantum Error Correction AzSEF.pptx (1) (1) (1).png}
\caption{This image illustrates the effect of finding a reduced $p^*$ value by knowing the maximal entry value. Knowing this cuts down the number of local-dimension values that must have their distance manually determined.}
\centering
\end{figure}
\fi

\begin{figure}[htb]
\includegraphics[width=\textwidth]{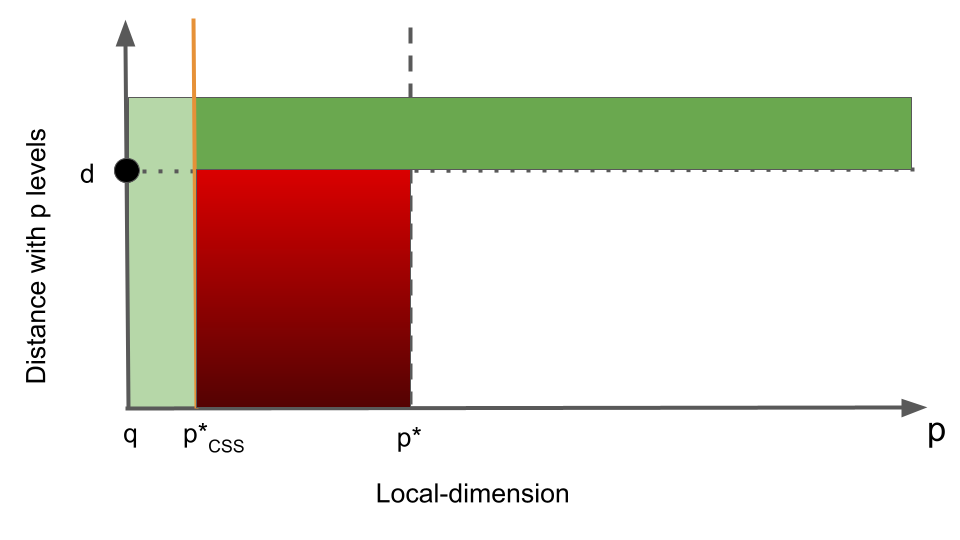}
\caption{This schematic image illustrates the effect of finding a reduced expression for the cutoff value for the distance promise. The light green area on the left is the set of local-dimension values which must have their distance manually checked, the red region is the set of local-dimension values which must have their distance manually checked for a CSS code if the general $p^*$ expression is used, whereas when $p_{CSS}^*$ is used that region automatically has the distance promised.}
\centering
\end{figure}



\begin{theorem} \label{css}
For all primes $p$, $p_{CSS}^*<p$, with $p_{CSS}^*=B^{d-1}(d-1)^{(d-1)/2}$, the distance of a local-dimension-invariant non-degenerate CSS code with parameters $[[n,k,d]]_q$ used over $p$ levels with parameters $[[n,k,d']]_p$, has $d'\geq d$.
\end{theorem}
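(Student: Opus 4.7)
The plan is to reduce the distance-preservation question in the CSS setting to two independent classical-code questions about linear independence of columns in the $X$-type and $Z$-type generator matrices, and then bound the relevant integer determinants with Hadamard's inequality. Because each question lives in an ambient space of half the size of the full symplectic picture used for general stabilizer codes, the determinant bound improves from $2(d-1)\times 2(d-1)$ submatrices to $(d-1)\times(d-1)$ submatrices, which is the origin of the claimed roughly square-root reduction of the cutoff.

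First, I would recast the non-degenerate distance-$d$ condition in CSS form. Let $H_X$ be the matrix whose rows are the $X$-vectors of the $X$-type generators, and $H_Z$ the matrix of $Z$-vectors of the $Z$-type generators. For an error $X^{\vec{a}}Z^{\vec{b}}$ the syndrome decouples into the pair $(H_X\vec{b},\,H_Z\vec{a})$ because same-type generators and operators automatically commute, so trivial syndrome forces $H_Z\vec{a}=0$ and $H_X\vec{b}=0$ independently. Hence the non-degenerate distance-$d$ property is equivalent to: every $d-1$ columns of $H_X$ are linearly independent, and every $d-1$ columns of $H_Z$ are linearly independent. A mixed error of combined weight $<d$ is automatically detected, since each of $\vec{a}$ and $\vec{b}$ is individually supported on at most $d-1$ qudits and would have to lie in the kernel of the restricted matrix for the syndrome to vanish.

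Second, I would lift to $\mathbb{Z}$ and exhibit non-vanishing minors. Since the code has distance $d$ over $\mathbb{Z}_q$, for every $(d-1)$-subset $S$ of qudits the columns of $H_Z|_S$ have rank $d-1$ modulo $q$, so some $(d-1)\times(d-1)$ submatrix has determinant non-zero modulo $q$, and likewise for $H_X$. In LDI form, the matrix entries are integers whose reductions modulo $q$ recover the original code, so each such determinant is a non-zero integer. Applying Hadamard's inequality to a $(d-1)\times(d-1)$ matrix whose entries have magnitude at most $B$ gives
\begin{equation}
|\det|\;\leq\;(d-1)^{(d-1)/2}\,B^{d-1}\;=\;p_{CSS}^{*}.
\end{equation}
Therefore, for any prime $p>p_{CSS}^{*}$ the reduction modulo $p$ of this integer determinant is still non-zero, so any $d-1$ columns of $H_X$ and of $H_Z$ remain linearly independent over $\mathbb{Z}_{p}$. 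By the reformulation in the first step, no error of weight less than $d$ can have trivial syndrome over $\mathbb{Z}_p$, giving $d'\geq d$.

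The main obstacle I anticipate is the first step: justifying cleanly that non-degenerate distance for a CSS code decouples into two independent column-independence conditions and does not require controlling mixed symplectic minors of size $2(d-1)$ as in the general stabilizer argument of \cite{gunderman2020local}. Once that decoupling is established, everything that follows is a routine Hadamard bound applied separately to $H_X$ and $H_Z$, and the roughly square-root improvement relative to $p^{*}=B^{2(d-1)}(2(d-1))^{d-1}$ is immediate from comparing $(d-1)^{(d-1)/2}B^{d-1}$ with $(2(d-1))^{(d-1)/2}B^{d-1}=\sqrt{p^{*}}$.
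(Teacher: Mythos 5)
Your proposal is correct and follows essentially the same route as the paper's proof: exploit the CSS block structure so that Pauli weight equals Hamming weight separately in the $X$ and $Z$ sectors, reduce undetectable errors to vanishing $(d-1)\times(d-1)$ integer minors, and bound those minors by Hadamard's inequality to get $p_{CSS}^{*}=B^{d-1}(d-1)^{(d-1)/2}$. The decoupling step you flag as the main obstacle is exactly the observation the paper makes (the syndrome of $X^{\vec a}Z^{\vec b}$ splits into $H_X\vec b$ and $H_Z\vec a$), and your justification of it is sound.
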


In order to prove Theorem \ref{css} we begin by recalling a pair of definitions used to categorize undetectable errors \cite{gunderman2020local}. Undetectable errors are those Pauli operators whose syndrome values for all generators are congruent to zero, but are not themselves part of the subgroup formed by the generators of the code. The following definitions form a complete description of all undetectable errors:

\begin{definition}
An unavoidable error is an error that commutes with all stabilizers and produces the $\vec{0}$ syndrome over the integers.
\end{definition}

These were dubbed such as no matter the local-dimension these errors will always be unable to be detected, this contrasts with the other category:

\begin{definition}
An artifact error is an error that commutes with all stabilizers but produces at least one syndrome that is only zero modulo the base.
\end{definition}

For errors within this category the fact that these errors are not able to be detected is a feature of the local-dimension. If this code, and this error, were considered over a different local-dimension then the error might be able to be detected, meaning the inability to notice this error is merely an artifact of the local-dimension of the system.

With these definitions, the proof of Theorem \ref{css} follows using largely the same argument as Theorem 16 from \cite{gunderman2020local}.

\begin{proof}We begin by recalling that Example 15 from \cite{gunderman2020local} showed that LDI codes, using the given method, preserve CSS structure. CSS codes take the structure:
\begin{equation}
    \begin{bmatrix}
    \mathcal{X} & | & 0\\
    0 & | & \mathcal{Z}
    \end{bmatrix}.
\end{equation}
The distance preservation property may then be considered separately within the two blocks $\mathcal{X}$ and $\mathcal{Z}$. Each block only has nonzero valued syndrome values for $Z$ and $X$ terms in a Pauli respectively. Given this, the Pauli weight for any error activating syndromes in just the $\mathcal{X}$ portion is the same as the Hamming weight for those operators. This is likewise the case for the $\mathcal{Z}$ portion. This then means that in order to have an undetectable error it must be undetectable within the $\mathcal{X}$ portion and within the $\mathcal{Z}$ portion.

Next, undetectable errors correspond to nontrivial kernels of matrix minors. Consider building up the errors composed solely of $Z$ operators by their Hamming weight $w$. The kernel is nontrivial if the $w\times w$ matrix minor corresponding to this Hamming string has a determinant that is congruent to zero modulo the local-dimension. To avoid tracking the locations of the Hamming weights we allow ourselves to arbitrarily permute the entries within each portion of the CSS code. The determinant for a minor can be zero in two different ways:
\begin{itemize}
    \item If the determinant is 0 over the integers then this error is either an unavoidable error or an error whose existence did not occur due to the choice of the local-dimension.
    \item If the determinant is not 0 over the integers, but is merely some multiple of the local-dimension, then this corresponds to an artifact error.
\end{itemize}
This means that we only have to worry about this second case lowering the distance upon changing the local-dimension. If the determinant of this minor can be guaranteed to be smaller than the local-dimension then we are promised that the distance will at least remain the same. We may bound the determinant of a $w\times w$ matrix minor using Hadamard's inequality, which we evaluate at $d-1$ since we only need to ensure no artifact errors are induced prior to weight $d$. This then provides:
\begin{equation}
    p_{CSS}^*=B^{d-1}(d-1)^{(d-1)/2}.
\end{equation}
Then so as long as the local-dimension is larger than this no $Z$ error will result in a distance lower than $d$. This same argument can be made for $X$ errors, which completes the proof.
\end{proof}

In essence this is based on needing only to perform the prior distance promise proof over each of the two blocks in these codes and only needing to worry about the Pauli weight of the error in each block being the same as the Hamming weight. This is slightly better than a quadratic improvement over the general $p^*$ bound. This is a step toward decreasing the value of $p^*$ significantly enough to make this method of practical use. Let us consider the utility of this improvement for the qubit Hamming family, with parameters $[[2^N-1,2^N-1-2N,3]]_2$. The first member of this family, $H_3$, is known as the Steane code with parameters $[[7,1,3]]_2$ \cite{steane}. An LDI form, with parameters $[[7,1,\geq 3]]_q$, for this code was found in \cite{gunderman2020local}:

\begin{equation}
\phi_\infty(H_3)=
\setcounter{MaxMatrixCols}{15}
\begin{bmatrix}
 1 & 0 & 0 & 0 & 0 & 0 & 1 & | & 0 & 0 & 0 & 0 & 1 & 1 & 0\\
 0 & 1 & 0 & 0 & 0 & 0 & 0 & | & 0 & 0 & 0 & 1 & 1 & 1 & 0\\
 0 & 0 & 1 & 0 & 0 & 0 & 1 & | & 0 & 0 & 0 & 1 & 0 & 1 & 0\\
 0 & 0 & 0 & 1 & 0 & 0 & 0 & | & -1 & 1 & 0 & 0 & 0 & 0 & 1\\
 0 & 0 & 0 & 0 & 1 & 0 & 0 & | & 0 & 1 & -1 & 0 & 0 & 0 & 1\\
 0 & 0 & 0 & 0 & 0 & 1 & 0 & | & 1 & 1 & 1 & 0 & 0 & 0 & 0\\
\end{bmatrix}.
\end{equation}
We return this code to CSS form by applying DFTs on particles $4$, $5$, and $6$:
\begin{equation}
\phi_\infty(H_3)=
\setcounter{MaxMatrixCols}{15}
\begin{bmatrix}
 1 & 0 & 0 & 0 & 1 & 1 & 1 & | & 0 & 0 & 0 & 0 & 0 & 0 & 0\\
 0 & 1 & 0 & 1 & 1 & 1 & 0 & | & 0 & 0 & 0 & 0 & 0 & 0 & 0\\
 0 & 0 & 1 & 1 & 0 & 1 & 1 & | & 0 & 0 & 0 & 0 & 0 & 0 & 0\\
 0 & 0 & 0 & 0 & 0 & 0 & 0 & | & -1 & 1 & 0 & -1 & 0 & 0 & 1\\
 0 & 0 & 0 & 0 & 0 & 0 & 0 & | & 0 & 1 & -1 & 0 & -1 & 0 & 1\\
 0 & 0 & 0 & 0 & 0 & 0 & 0 & | & 1 & 1 & 1 & 0 & 0 & -1 & 0\\
\end{bmatrix}.
\end{equation}

Using the bound for $p^*$ without knowing the code is CSS but knowing that the maximal entry is $1$, we find $p^*=16$. This leaves a handful of primes still to verify the distance over. If instead we use Theorem \ref{css} with the known maximal entry of $1$ we obtain $p^*=1^22^{2/2}$, which is $2$. This means that the distance of this code is always at least preserved. While this was shown in that prior work, this provides a quicker way to obtain this fact and illustrates the improvements obtained here.

The prior description of an LDI representation for the Steane code used the prescriptive method for generating a local-dimension-invariant form for the code, however, we could have equivalently performed the following alteration:
\begin{equation}
\phi_2 (H_3)=
\setcounter{MaxMatrixCols}{15}
\begin{bmatrix}
 1 & 1 & 1 & 1 & 0 & 0 & 0 & | & 0 & 0 & 0 & 0 & 0 & 0 & 0\\
 0 & 1 & 1 & 0 & 1 & 1 & 0 & | & 0 & 0 & 0 & 0 & 0 & 0 & 0\\
 0 & 1 & 0 & 1 & 1 & 0 & 1 & | & 0 & 0 & 0 & 0 & 0 & 0 & 0\\
 0 & 0 & 0 & 0 & 0 & 0 & 0 & | & 1 & 1 & 1 & 1 & 0 & 0 & 0\\
 0 & 0 & 0 & 0 & 0 & 0 & 0 & | & 0 & 1 & 1 & 0 & 1 & 1 & 0\\
 0 & 0 & 0 & 0 & 0 & 0 & 0 & | & 0 & 1 & 0 & 1 & 1 & 0 & 1\\
\end{bmatrix}.
\end{equation}
We then put the code into an LDI form by flipping the signs of some of the entries, producing:
\begin{equation}\label{ldisteane}
\phi_\infty(H_3)=
\setcounter{MaxMatrixCols}{15}
\begin{bmatrix}
 1 & 1 & 1 & 1 & 0 & 0 & 0 & | & 0 & 0 & 0 & 0 & 0 & 0 & 0\\
 0 & 1 & 1 & 0 & 1 & 1 & 0 & | & 0 & 0 & 0 & 0 & 0 & 0 & 0\\
 0 & 0 & 1 & 1 & 0 & 1 & 1 & | & 0 & 0 & 0 & 0 & 0 & 0 & 0\\
 0 & 0 & 0 & 0 & 0 & 0 & 0 & | & 1 & -1 & 1 & -1 & 0 & 0 & 0\\
 0 & 0 & 0 & 0 & 0 & 0 & 0 & | & 0 & 1 & -1 & 0 & 1 & -1 & 0\\
 0 & 0 & 0 & 0 & 0 & 0 & 0 & | & 0 & 0 & 1 & -1 & 0 & -1 & 1\\
\end{bmatrix}.
\end{equation}
This also satisfies the same distance promise--always having distance at least 3, regardless of the local-dimension. In fact, we can extend this method of flipping the signs of some entries in this code to that of the whole family of codes within the $[[2^N-1,2^N-1-2N,3]]_2$ family. We show that this sign flipping can always generate an LDI representation for the code. In particular, we find that for this family we can prove a tight bound on the value of the maximal entry:

\begin{lemma}
For the qubit quantum Hamming code family with parameters $[[2^N-1,2^N-1-2N,3]]_2$, there is an LDI representation such that $B= 1$ for all members of the family.
\end{lemma}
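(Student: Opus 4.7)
The plan is to exhibit an explicit sign-flipped version of the qubit Hamming CSS parity-check matrix that satisfies the LDI commutation relation over $\mathbb{Z}$ while keeping all entries in $\{-1,0,+1\}$. Since CSS structure makes $X$-type generators mutually commute over $\mathbb{Z}$ regardless of signs, and likewise for $Z$-type generators, the full LDI condition collapses to the single matrix equation $H_X \tilde{H}_Z^{T} = 0$ over $\mathbb{Z}$, where $H_X$ is the unsigned binary Hamming parity check and $\tilde{H}_Z$ is its signed counterpart. I would index the $2^N-1$ columns by the nonzero vectors $v \in \mathbb{F}_2^N \setminus \{0\}$ so that $H_{bj} = (v_j)_b$, and leave $H_X = H$ with every nonzero entry equal to $+1$.

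For the signing, since $N \geq 3$ throughout the family, I can fix, for each row $b$, two distinct indices $c_1(b), c_2(b) \in \{1,\dots,N\} \setminus \{b\}$, and then define $(\tilde{H}_Z)_{bj} = H_{bj} \cdot (-1)^{(v_j)_{c_1(b)} + (v_j)_{c_2(b)}}$. This automatically gives entries in $\{-1,0,+1\}$, hence $B = 1$. Verification of $H_X \tilde{H}_Z^{T} = 0$ is entrywise: the $(a,b)$ entry is $\sum_{v : v_a = v_b = 1} (-1)^{v_{c_1(b)} + v_{c_2(b)}}$, and because $c_1(b) \neq c_2(b)$ with neither equal to $b$, at least one of the two indices---call it $c$---lies outside $\{a,b\}$; the coordinate $v_c$ then varies freely over $\{0,1\}$ within the indexing set, and the factor $\sum_{v_c \in \{0,1\}} (-1)^{v_c} = 0$ forces the entry to vanish.

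Finally, because $\tilde{H}_Z$ reduces to $H$ modulo $2$, the resulting object is a bona fide LDI representative of the Hamming CSS code $[[2^N-1, 2^N-1-2N, 3]]_2$, completing the claim. The one delicate step is designing a sign rule that simultaneously clears all $N^2$ pairwise commutators using only $\pm 1$ entries; the above character-sum choice works uniformly because a ``free'' index $c \in \{c_1(b), c_2(b)\}$ always exists---both indices are free when $a = b$, and when $a \neq b$ only one of the two could possibly coincide with $a$ (neither can equal $b$ by construction), so the other serves as the cancelling index. Combined with Theorem~\ref{css}, which yields $p^{*}_{\mathrm{CSS}} = 1^{2} \cdot 2^{1} = 2$ under $B = 1$ and $d = 3$, this gives distance at least $3$ for every prime local-dimension.
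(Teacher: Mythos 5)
Your construction is correct, and it takes a genuinely different route from the paper. The paper proves the lemma by induction on $N$: it writes $\phi_2(H_{N+1})$ in a recursive block form built from $H_N$ and $H_N^\infty$, flips signs in the newly added $Z$-row using an alternating $(-1\ 1)$ pattern, and checks the needed commutators against the recursively defined blocks. You instead give a closed-form sign rule: indexing columns by the nonzero vectors $v\in\mathbb{F}_2^N$, leaving the $X$ block as the plain binary Hamming matrix, and signing the $Z$ block by the character $(-1)^{v_{c_1(b)}+v_{c_2(b)}}$ with $c_1(b),c_2(b)\notin\{b\}$ distinct. Your reduction of the LDI condition to the single equation $H_X\tilde H_Z^{T}=0$ over $\mathbb{Z}$ is right (X--X and Z--Z symplectic products vanish identically for CSS generators), and the entrywise verification is a clean orthogonality-of-characters argument: in every entry at least one of $c_1(b),c_2(b)$ is a coordinate unconstrained by the conditions $v_a=v_b=1$, so summing it out kills the entry; your case analysis (both free when $a=b$, at least one free when $a\neq b$ since neither can equal $b$) is complete, and $N\geq 3$ is exactly what guarantees two spare indices exist. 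What your approach buys is an explicit, non-recursive formula for every entry and a one-line verification, which also makes the role of $N\ge 3$ transparent; what the paper's induction buys is a representation of each family member explicitly nested inside the next, at the cost of tracking a more delicate recursive sign pattern. The final step invoking Theorem~\ref{css} with $B=1$, $d=3$ matches the paper's corollary.
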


Recall that this family is generated by each column being one of the nonzero binary strings of length $N$ in each the $X$ component and the $Z$ component. We will take the register placement of each string to be the same in the $X$ component and the $Z$ component.

\begin{proof} We begin by noting that for the $N=3$ case we already have the Steane code discussed just above (equation (\ref{ldisteane})). We will now prove inductively that we may always generate an LDI form such that $B=1$. Let $H_N^\infty$ be the $N$-th family member such that all generators in $H_N^\infty$ commute with those of $H_N$. Next, consider for $N\geq 4$, one can write the next member of the family in terms of the prior member as:

\begin{equation}
    \phi_2(H_{N+1})=\begin{bmatrix}
     1 & 1^{\otimes 2^N-1} & 0^{\otimes 2^N-1} & | & 0 & 0 & 0\\
     0 & H_N & H_N & | & 0 & 0 & 0\\
     0 & 0 & 0 & | & 1 & 1^{\otimes 2^N-1} & 0^{\otimes 2^N-1}\\
     0 & 0 & 0 & | & 0 & H_N^\infty & H_N^\infty
    \end{bmatrix},
\end{equation}
with the superscript $\otimes$ indicating repetition of that value. However, this version of the code is not in an LDI representation. We make the following sign changes and then verify that this version is an LDI representation:
\begin{equation}
    \phi_\infty(H_{N+1}^\infty)=\begin{bmatrix}
     1 & 1^{\otimes 2^N-1} & 0^{\otimes 2^N-1} & | & 0 & 0 & 0\\
     0 & H_N & H_N & | & 0 & 0 & 0\\
     0 & 0 & 0 & | & 1 & (-1\ 1)^{\otimes 2^{N-1}-2}\bigoplus (-1) & 0^{\otimes 2^N-1}\\
     0 & 0 & 0 & | & 0 & H_N^\infty & H_N^\infty
    \end{bmatrix},
\end{equation}
where $\bigoplus$ indicates a direct sum, tacking that value onto the end of the repetition. By the inductive hypothesis all operators in $H_N$ commute with those of $H_N^\infty$. Next, the first row commutes with the first nontrivial row in the $Z$ component as there are an equal number of $-1$ and $+1$. The sum of each row in $H_N^\infty$ is $0$, again due to alternating signs, and so the first row commutes with all rows in $H_N^\infty$. We must lastly ensure that $(-1\ 1)^{\otimes 2^{N-1}-2}\bigoplus (-1)$ commutes with each row in $H_N$. We will denote $v=(-1\ 1)^{\otimes 2^{N-1}-2}\bigoplus (-1)$. Consider the most recently added row in $H_N$. This will be given by $1^{\otimes 2^{N-1}}\bigoplus 0^{\otimes 2^{N-1}-1}$, which will commute with $v$. The following rows will be those of $(0\bigoplus H_{N-2}\bigoplus H_{N-2})^{\otimes 2}$, for which each row commutes with $v$ as for each register in $H_{N-2}$ there will be one time where it is added, $+1$ in $v$, and one time where it will be subtracted, $-1$ in $v$.
\end{proof}
\if{false}
\begin{proof}
For this proof it will suffice to show that the dot product between any pair of rows in just the $\mathcal{X}$ component can be made $0$ while only altering the signs of the entries. This is equivalent to find a local-dimension-invariant version of the classical Hamming code family. We will prove this in parts, using induction. We denote by $H_n^\infty$ the LDI version of the $n$-th family member of the Hamming family.

For the base case, we require two initial LDI representations for the code, which can be given by:

\setcounter{MaxMatrixCols}{24}
\begin{equation}
    H_3^\infty=\begin{bmatrix}
        -1 & 1 & 1 & 1 & 0 & 0 & 0\\
        1 & 1 & 0 & 0 & -1 & 1 & 0\\
        1 & 0 & 1 & 0 & 1 & 0 & 1\\
    \end{bmatrix}
\end{equation}

\begin{equation}
    H_4^\infty=\begin{bmatrix}
        1 & 1 & 1 & -1 & 1 & -1 & -1 & 1 & 0 & 0 & 0 & 0 & 0 & 0 & 0\\
        -1 & 1 & 1 & 1 & 0 & 0 & 0 & 0 & -1 & 1 & 1 & 1 & 0 & 0 & 0\\
        1 & 1 & 0 & 0 & -1 & 1 & 0 & 0 & 1 & 1 & 0 & 0 & -1 & 1 & 0\\
        1 & 0 & 1 & 0 & 1 & 0 & 1 & 0 & 1 & 0 & 1 & 0 & 1 & 0 & 1\\
    \end{bmatrix}
\end{equation}

Following this we take as our inductive hypothesis that there is a procedure to generate $H_n^\infty$ and $H_{n-1}^\infty$, and will induct for the case of $H_{n+1}^\infty$. In which case by the structure of the Hamming code we have:
\begin{equation}
    H_{n+1}=\begin{bmatrix}
        1^{\otimes 2^{n-1}-1} & 1 & 1^{\otimes 2^{n-1}-1} & 1 & 0^{\otimes 2^{n-1}-1} & 0 & 0^{\otimes 2^{n-1}-1}\\
        1^{\otimes 2^{n-1}-1} & 1 & 0^{\otimes 2^{n-1}-1} & 0 & 1^{\otimes 2^{n-1}-1} & 1 & 0^{\otimes 2^{n-1}-1}\\
        H_{n-1}^\infty & \mathbf{0} & H_{n-1}^\infty & \mathbf{0} & H_{n-1}^\infty & \mathbf{0} & H_{n-1}^\infty
    \end{bmatrix}.
\end{equation}
In this form the last $n-1$ rows all have dot products of zero, due to the inductive hypothesis. To ensure the second row also has dot product zero with all of the rows beneath it, we may make the following sign changes:
\begin{equation}
    H_{n+1}=\begin{bmatrix}
        1^{\otimes 2^{n-1}-1} & 1 & 1^{\otimes 2^{n-1}-1} & 1 & 0^{\otimes 2^{n-1}-1} & 0 & 0^{\otimes 2^{n-1}-1}\\
        1^{\otimes 2^{n-1}-1} & 1 & 0^{\otimes 2^{n-1}-1} & 0 & (-1)^{\otimes 2^{n-1}-1} & 1 & 0^{\otimes 2^{n-1}-1}\\
        H_{n-1}^\infty & \mathbf{0} & H_{n-1}^\infty & \mathbf{0} & H_{n-1}^\infty & \mathbf{0} & H_{n-1}^\infty
    \end{bmatrix}.
\end{equation}
The final challenge is to ensure the first row also has zero dot product with the other rows. The last $n-1$ rows will have a zero dot product so long as the two repetitions of $1$'s in the first row have complementary signs--meaning that for each position that there is a $+1$ the same position in the other repetition has a $-1$. We select as our choice the repetitions: $v:=(1\bigoplus -1)^{\otimes 2^{n-2}}\bigoplus 1$ and $\bar{v}:=(-1\bigoplus 1)^{\otimes 2^{n-2}}\bigoplus (-1)$ and set the middle $1$ to $-1$, producing
\begin{equation}
    H_{n+1}=\begin{bmatrix}
        v & -1 & \bar{v} & 1 & 0^{\otimes 2^{n-1}-1} & 0 & 0^{\otimes 2^{n-1}-1}\\
        1^{\otimes 2^{n-1}-1} & 1 & 0^{\otimes 2^{n-1}-1} & 0 & (-1)^{\otimes 2^{n-1}-1} & 1 & 0^{\otimes 2^{n-1}-1}\\
        H_{n-1}^\infty & \mathbf{0} & H_{n-1}^\infty & \mathbf{0} & H_{n-1}^\infty & \mathbf{0} & H_{n-1}^\infty
    \end{bmatrix}.
\end{equation}
This has zero dot product with the last $n-1$ rows since for every $+1$ there is a $-1$ for each register in $H_{n-1}^\infty$. Lastly the first two rows have zero dot product as exactly half of $v\bigoplus (-1)$ is $-1$ and half is $+1$, while the second row is all $+1$ for those registers. This completes the induction. [check values, but otherwise seems good--self innerproduct is not zero, instead use a form like: $[H_n\ 0\ 0\ H_n^\infty]$, with $H_n^\infty$ having alternating signs]
\end{proof}\fi

While the ability to write this family in an LDI representation only using $\{0,\pm 1\}$ as the entries is of limited interest, applying this result with Theorem \ref{css} we obtain:

\begin{corollary}
All qubit Hamming codes have an LDI representation that has distance at least $3$, meaning that this generates a $[[2^N-1,2^N-2N-1,\geq 3]]_q$ code family for all $N\geq 3$ and $q$ a prime.
\end{corollary}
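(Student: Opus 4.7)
The plan is to combine the preceding Lemma with Theorem~\ref{css} applied to the parameters of the Hamming family. The Lemma provides, for each $N\geq 3$, an explicit LDI representation of the $[[2^N-1,2^N-1-2N,3]]_2$ Hamming code in which every entry of the symplectic matrix lies in $\{0,\pm 1\}$, so the maximal entry bound is $B=1$. The code is manifestly CSS (the construction preserves the block-diagonal CSS structure at every inductive step). Hence the hypotheses of Theorem~\ref{css} are met with $d=3$ and $B=1$.

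Next I would substitute into the cutoff formula from Theorem~\ref{css} to obtain
\begin{equation}
p_{CSS}^* = B^{d-1}(d-1)^{(d-1)/2} = 1^{2}\cdot 2^{1} = 2.
\end{equation}
By Theorem~\ref{css}, for every prime $p$ with $p > p_{CSS}^* = 2$, the reduction of this LDI code to $p$ levels has distance at least $3$. This handles every odd prime uniformly.

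It remains only to note the boundary case $p=q=2$: the LDI representation from the Lemma reduces modulo $2$ to the original qubit Hamming code $H_N$, which is known to have distance $3$. Combining this with the previous paragraph, the distance is at least $3$ for every prime local-dimension $q$, yielding the claimed $[[2^N-1,\,2^N-1-2N,\,\geq 3]]_q$ family.

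I do not expect a genuine obstacle here, since the work has already been isolated in the two prior results. The only mild subtlety worth explicitly pointing out in the write-up is that the non-degeneracy hypothesis of Theorem~\ref{css} is inherited: the classical Hamming code is a perfect single-error-correcting code, and its CSS promotion is non-degenerate, so the theorem applies without qualification.
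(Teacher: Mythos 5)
Your proposal is correct and follows essentially the same route as the paper: invoke the Lemma to get $B=1$, note the family is a non-degenerate CSS code, and apply Theorem~\ref{css} to conclude $p_{CSS}^*=2$ so the distance promise holds for all primes. The paper's proof is just a terser version of this; your explicit handling of the $p=2$ base case and the non-degeneracy hypothesis is a welcome bit of extra care but not a different argument.
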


\begin{proof}
This family is a non-degenerate CSS code family. This result then follows from the CSS distance promise for LDI codes and the above Lemma showing that $B=1$ may be achieved for this family.
\end{proof}

This shows that the local-dimension-invariant form is at least in some cases able to provide tight expressions that allow for the full importing of code families for larger local-dimension systems. Additionally, this provides another qudit code family with two parameters $N$ and $q$.


\if{false}
\begin{figure}[htb]
\includegraphics[width=\textwidth]{Website Screenshot.png}
\caption{This is a screenshot of the web page which allows others to utilize my program. It is located at \url{bit.ly/qGenerator}.}
\centering
\end{figure}
\fi

\section{Discussion}

While few qudit quantum computer prototypes are currently being built this provides another avenue for expanding computational power. \if{false}for following a quantum Moore's law and may at times be a more practical method for expanding the power. Being able to create these customized codes will be of use while constructing small qudit quantum computers, however, once larger computers are being built this method will likely be impractical so local-dimension-invariant codes will be of more use.\fi While currently the size of the local-dimension must be large to promise the distance of an arbitrary LDI code, we have provided a case where this can be at least quadratically improved. \if{false}In addition, having the tools provided here to generate more examples of LDI codes will allow one to test further hypotheses allowing for this cutoff value to be further reduced through theoretical techniques. These techniques can be used in conjunction to form new codes, even with parameters forbidden by codes with a smaller local-dimension, then can be used to show that the distance can be promised even if the local-dimension is increased.\fi Beyond this improvement we have also constructed a new qudit code family that is directly imported from a qubit code family. While there already exists qudit quantum Hamming families with paramters $[[n,n-2m,3]]_q$ for $m\geq 2$ in the cases of $gcd(m,q^2-1)=1,\ n=(q^{2m}-1)/(q^2-1)$ and $gcd(m,q-1)=1,\ n=(q^m-1)/(q-1)$, this new code family fills in values of $n$ which are not covered by these \cite{ketkar2006nonbinary}. Additionally, while there exists $[[n,n-4,3]]_q$ codes for odd prime power $q$ values and $4\leq n\leq q^2+1$, this provides options for $n$ beyond $q^2+1$ \cite{li2010construction}. Lastly, while arguably Maximally-Distance-Separable (MDS) codes are optimal, there are a number of values of $n$ for which there are no known MDS codes for a given value of $q$ \cite{shi2019new}. While the family presented here is not MDS, perhaps the analysis used in this work can be applied to help fill in missing parameter choices. This work has only concerned itself with preserving the distance of codes in LDI representations, but investigating whether other desirable properties are preserved is also an important direction.  

\if{false}
\section*{Acknowledgments}

We thank Elijah Durso-Sabina for the suggestion of implementing equation \ref{wts} and pointing us to the use of such for flag fault-tolerant error-correction.
\fi
\section*{Funding}
L.G. was supported by Industry Canada, the Canada
First Research Excellence Fund (CFREF), the Canadian Excellence Research Chairs (CERC 215284) program, the Natural Sciences and Engineering Research Council of Canada
(NSERC RGPIN-418579) Discovery program, the Canadian
Institute for Advanced Research (CIFAR), and the Province
of Ontario.

\if{false}
\section{Introduction}        
The journal of {\it Quantum Information and Computation},
for both on-line and in-print editions,
will be produced by using the latex files of manuscripts
provided by the authors. It is therefore essential that the manuscript 
be in its final form, and in the format designed for the journal 
because there will be no futher editing. The authors are strongly encouraged 
to use Rinton latex template to prepare their manuscript. Or, the authors 
should please follow the instructions given here if they prefer to use other 
software. In the latter case, the authors ought to
provide a postscript file of their paper for publication.

\section{Text}
\noindent
Contributions are to be in English. Authors are encouraged to
have their contribution checked for grammar.  
Abbreviations are allowed but should be spelt
out in full when first used. 

\setcounter{footnote}{0}
\renewcommand{\thefootnote}{\alph{footnote}}

The text is to be typeset in 10 pt Times Roman, single spaced
with baselineskip of 13 pt. Text area (excluding running title)
is 5.6 inches across and 8.0 inches deep.
Final pagination and insertion of running titles will be done by
the editorial. Number each page of the manuscript lightly at the
bottom with a blue pencil. Reading copies of the paper can be
numbered using any legible means (typewritten or handwritten).

\section{Headings}
\noindent
Major headings should be typeset in boldface with the first
letter of important words capitalized.

\subsection{Sub-headings}
\noindent
Sub-headings should be typeset in boldface italic and capitalize
the first letter of the first word only. Section number to be in
boldface roman.

\subsubsection{Sub-subheadings}
\noindent
Typeset sub-subheadings in medium face italic and capitalize the
first letter of the first word only. Section number to be in
roman.

\subsection{Numbering and Spacing}
\noindent
Sections, sub-sections and sub-subsections are numbered in
Arabic.  Use double spacing before all section headings, and
single spacing after section headings. Flush left all paragraphs
that follow after section headings.

\subsection{Lists of items}
\noindent
Lists may be laid out with each item marked by a dot:
\begin{itemlist}
 \item item one,
 \item item two.
\end{itemlist}
Items may also be numbered in lowercase roman numerals:
\begin{romanlist}
 \item item one
 \item item two
          \begin{alphlist}
          \item Lists within lists can be numbered with lowercase
              roman letters,
          \item second item.
          \end{alphlist}
\end{romanlist}

\section{Equations}
\noindent
Displayed equations should be numbered consecutively in each
section, with the number set flush right and enclosed in
parentheses.

\begin{equation}
\mu(n, t) = {
\sum^\infty_{i=1} 1(d_i < t, N(d_i) = n) \over \int^t_{\sigma=0} 1(N(\sigma) 
= n)d\sigma}\,. \label{this}
\end{equation}

Equations should be referred to in abbreviated form,
e.g.~``Eq.~(\ref{this})'' or ``(2)''. In multiple-line
equations, the number should be given on the last line.

Displayed equations are to be centered on the page width.
Standard English letters like x are to appear as $x$
(italicized) in the text if they are used as mathematical
symbols. Punctuation marks are used at the end of equations as
if they appeared directly in the text.

\vspace*{12pt}
\noindent
{\bf Theorem~1:} Theorems, lemmas, etc. are to be numbered
consecutively in the paper. Use double spacing before and after
theorems, lemmas, etc.

\vspace*{12pt}
\noindent
{\bf Proof:} Proofs should end with \square\,.

\section{Illustrations and Photographs}
\noindent
Figures are to be inserted in the text nearest their first
reference. The postscript files of figures can be imported by using
the commends used in the examples here.

\begin{figure} [htbp]
\vspace*{13pt}
\centerline{\epsfig{file=fig1-eps-converted-to.pdf, width=8.2cm}} 
\vspace*{13pt}
\fcaption{\label{motion}figure caption goes here.}
\end{figure}

Figures are to be sequentially numbered in Arabic numerals. The
caption must be placed below the figure. Typeset in 8 pt Times
Roman with baselineskip of 10~pt. Use double spacing between a
caption and the text that follows immediately.

Previously published material must be accompanied by written
permission from the author and publisher.

\section{Tables}
\noindent
Tables should be inserted in the text as close to the point of
reference as possible. Some space should be left above and below
the table.

Tables should be numbered sequentially in the text in Arabic
numerals. Captions are to be centralized above the tables.
Typeset tables and captions in 8 pt Times Roman with
baselineskip of 10 pt.

\vspace*{4pt}   
\begin{table}[hb]
\tcaption{Number of tests for WFF triple NA = 5, or NA = 8.}
\centerline{\footnotesize NP}
\centerline{\footnotesize\smalllineskip
\begin{tabular}{l c c c c c}\\
\hline
{} &{} &3 &4 &8 &10\\
\hline
{} &\phantom03 &1200 &2000 &\phantom02500 &\phantom03000\\
NC &\phantom05 &2000 &2200 &\phantom02700 &\phantom03400\\
{} &\phantom08 &2500 &2700 &16000 &22000\\
{} &10 &3000 &3400 &22000 &28000\\
\hline\\
\end{tabular}}
\end{table}

If tables need to extend over to a second page, the continuation
of the table should be preceded by a caption, e.g.~``({\it Table
2. Continued}).''

\section{References Cross-citation}
\noindent
References cross-cited in the text are to be numbered consecutively in
Arabic numerals, in the order of first appearance. They are to
be typed in brackets such as \cite{first}  and \cite{cal, niel, mar}.

\section{Sections Cross-citation}\label{sec:abc}
\noindent
Sections and subsctions can be cross-cited in the text by using the latex command
shown here. In Section~\ref{sec:abc}, we discuss ....

\section{Footnotes}
\noindent
Footnotes should be numbered sequentially in superscript
lowercase Roman letters.\fnm{a}\fnt{a}{Footnotes should be
typeset in 8 pt Times Roman at the bottom of the page.}

\nonumsection{Acknowledgements}
\noindent
We would thank ...
\fi


\if{false}

\fi

\bibliographystyle{unsrt}
\phantomsection  
\renewcommand*{\bibname}{References}

\bibliography{main}


\if{false}
\vspace{0.17cm}
{\bf Generators for the prior codes so that the generators satisfy $wt_p(\mathcal{C})$:}
\setcounter{MaxMatrixCols}{24}

\begin{equation}
[[6,2,3]]_3=\begin{bmatrix}
0 & 1 & 1 & 1 & 0 & 2 & | & 0 & 0 & 2 & 0 & 1 & 0\\
0 & 1 & 1 & 2 & 1 & 0 & | & 2 & 1 & 0 & 2 & 0 & 0\\
1 & 0 & 1 & 2 & 2 & 1 & | & 2 & 0 & 1 & 2 & 1 & 2\\
2 & 2 & 2 & 1 & 0 & 1 & | & 1 & 1 & 0 & 2 & 0 & 2
\end{bmatrix}
\end{equation}

\begin{equation}
[[7,3,3]]_3=\begin{bmatrix}
1 & 1 & 0 & 1 & 2 & 2 & 2 & | & 0 & 0 & 0 & 0 & 0 & 2 & 0\\
2 & 0 & 1 & 2 & 1 & 0 & 1 & | & 0 & 2 & 0 & 2 & 2 & 0 & 2\\
2 & 1 & 2 & 0 & 1 & 1 & 0 & | & 1 & 0 & 1 & 0 & 2 & 1 & 2\\
2 & 0 & 1 & 0 & 2 & 2 & 2 & | & 0 & 0 & 1 & 2 & 1 & 1 & 2
\end{bmatrix}
\end{equation}

\begin{equation}
[[8,4,3]]_3=\begin{bmatrix}
1 & 0 & 0 & 1 & 0 & 2 & 1 & 1 & | & 1 & 0 & 2 & 2 & 2 & 0 & 0 & 2\\
0 & 1 & 2 & 1 & 2 & 1 & 0 & 2 & | & 0 & 0 & 1 & 1 & 2 & 0 & 1 & 0\\
1 & 0 & 1 & 1 & 0 & 1 & 2 & 0 & | & 1 & 1 & 2 & 1 & 0 & 2 & 1 & 1\\
1 & 1 & 2 & 0 & 0 & 1 & 1 & 1 & | & 2 & 1 & 0 & 2 & 0 & 0 & 0 & 0
\end{bmatrix}
\end{equation}

\begin{equation}
[[10,5,3]]_3=\begin{bmatrix}
1 & 1 & 0 & 2 & 0 & 0 & 2 & 1 & 0 & 2 & | & 2 & 2 & 0 & 1 & 0 & 0 & 1 & 2 & 0 & 0\\
0 & 0 & 1 & 0 & 2 & 0 & 0 & 2 & 0 & 0 & | & 0 & 1 & 0 & 2 & 0 & 2 & 2 & 1 & 0 & 0\\
1 & 1 & 0 & 0 & 2 & 1 & 0 & 1 & 2 & 1 & | & 1 & 2 & 0 & 0 & 1 & 1 & 0 & 1 & 1 & 1\\
2 & 2 & 1 & 0 & 0 & 2 & 1 & 0 & 2 & 0 & | & 2 & 0 & 0 & 0 & 1 & 0 & 2 & 0 & 0 & 0\\
0 & 2 & 0 & 0 & 1 & 1 & 2 & 0 & 0 & 2 & | & 1 & 1 & 1 & 0 & 1 & 2 & 2 & 1 & 0 & 1
\end{bmatrix}
\end{equation}

\if{false}
\appendix

\noindent
Appendices should be used only when absolutely necessary. They
should come after the References. If there is more than one
appendix, number them alphabetically. Number displayed equations
occurring in the Appendix in this way, e.g.~(\ref{that}), (A.2),
etc.
\begin{equation}
\langle\hat{O}\rangle=\int\psi^*(x)O(x)\psi(x)d^3x~. 
\label{that}
\end{equation}
\fi
\fi

\end{document}